\newcommand{\commentout}[1]{}
\DeclareSymbolFont{symbols}{OMS}{cmsy}{m}{n}
\newtheorem{thm}{Theorem}[section]
\newtheorem{lem}[thm]{Lemma}
\newtheorem{cor}[thm]{Corollary}
\newtheorem{prop}[thm]{Proposition}
\theoremstyle{definition}
\newtheorem{example}[thm]{Example}
\theoremstyle{remark}
\newcommand{\cB}{\mathcal{B}}
\newcommand{\cC}{\mathcal{C}}
\newcommand{\A}{\mathcal{A}}
\newcommand{\RR}{\mathbb{R}}
\newcommand{\TT}{\mathbb{T}}
\newcommand{\ZZ}{\mathbb{Z}}
\newcommand{\NN}{\mathbb{N}}
\DeclareMathOperator{\Rep}{Rep}
\DeclareMathOperator{\Hom}{Hom}
\DeclareMathOperator{\Ad}{Ad}
\DeclareMathOperator{\id}{id}
\DeclareMathOperator{\sign}{sign}
\DeclareMathOperator{\Aut}{Aut}
\newcommand{\ima}{\mathrm{i}}
\DeclareRobustCommand{\eg}{e.g.\@\xspace}
\DeclareRobustCommand{\cf}{cf.\@\xspace}
\DeclareRobustCommand{\ie}{i.e.\@\xspace}
\DeclareRobustCommand{\etc}{%
    \@ifnextchar{.}%
        {etc}%
        {etc.\@\xspace}%
}
\def\u1net{{\A_\RR}}
\DeclareMathOperator{\ord}{ord}
\DeclareMathOperator{\Lex}{Lex}
\DeclareMathOperator{\Out}{Out}
\DeclareMathOperator{\Vect}{Hilb}
\DeclareMathOperator*{\Hilb}{Hilb}
\DeclareMathOperator{\Irr}{Irr}
\newcommand{\tRep}[1]{#1\text{--}\Rep}
\newcommand{\QQ}{\mathbb{Q}}
\newcommand{\cZ}{\mathcal{Z}}
\begin{document}
\date{\today}
\dateposted{\today}
\title[Anomalies of Holomorphic Cyclic Permutation Orbifolds]{A Remark About the Anomalies of Cyclic Holomorphic Permutation Orbifolds}

\address{Department of Mathematics, Morton Hall 321, 1 Ohio University, Athens, OH 45701, USA}
\author{Marcel Bischoff}
\email{bischoff@ohio.edu}
\email{marcel@localconformal.net}
\thanks{Supported in part by NSF DMs Grant 1700192/1821162}

\begin{abstract}
Using a result of Longo and Xu, we show that the anomaly arising from a cyclic permutation orbifold of order 3 of a holomorphic conformal net $\A$ with central charge $c=8k$  
depends on the ``gravitational anomaly'' $k\pmod 3$.
In particular, the  conjecture that holomorphic permutation orbifolds are non-anomalous and therefore a stronger conjecture of M\"uger about 
braided crossed $S_n$-categories
arising from permutation orbifolds of completely rational conformal nets are wrong.
More general, we show that cyclic permutations of order $n$ are non-anomalous if and only if 
$3\nmid n$ or $24|c$. 
We also show that all cyclic permutation gaugings of $\Rep(\A)$ arise from conformal nets.
\end{abstract}

\maketitle
\setcounter{tocdepth}{3}

\newcommand{\bs}{$\backslash$}
\newcommand{\CS}{/\!\!/}
\renewcommand{\L}{\mathrm{L}}

\newcommand{\tikzmath}[2][0.50]
{\vcenter{\hbox{\begin{tikzpicture}[scale=#1] #2\end{tikzpicture}}}
}
\newcommand{\colM}{black!20}

\newcommand{\colMa}{orange!50}
\newcommand{\colMab}{green!50}
\newcommand{\colMb}{red!30}
\renewcommand{\colMa}{black!20} \renewcommand{\colMab}{black!35} \renewcommand{\colMb}{black!30}
\newcommand{\colN}{black!10}
\newcommand{\mydot}[1]{\begin{scope}[shift={#1}] \fill[shift only] (0,0) circle (1.5pt); \end{scope}}

\newcommand{\cV}{\mathcal{V}}
\noindent

\newcommand{\cocyc}[1]{\omega_{#1}}
\section{Orbifolds and anomalies}
Conformal nets axiomatize chiral conformal field theory in the framwork of algebraic quantum field theory using von Neumann algebras. 
There is a notion of a completely rational conformal net \cite{KaLoMg2001}
whose representation category $\Rep(\A)$ is a modular tensor category.
Let $\A$ be a \textbf{holomorphic conformal net}, \ie 
a completely rational conformal net
with trivial representation category $\Rep(\A)\cong\Vect$.
Here we denote by $\Vect$ the trivial unitary modular tensor category of
finite-dimensional Hilbert spaces.
Let $G\leq \Aut(\A)$ be a finite group of automorphisms of the net $\A$,
see \cite{Xu2000-2,Mg2005}.
Then it is well-known \cite{KaLoMg2001,Mg2005,Mg2010,Bi2015,Bi2018} that there is a unique class $[\omega]\in H^3(G,\TT)$, 
such that the category of $G$-twisted representations of $\A$ denoted by 
$\tRep G^I(\A)$ is tensor equivalent to the category $\Vect_G^\omega$ of $G$-graded finite-dimensional Hilbert spaces
with associator given by $\omega$.
More precisely, for every $g\in G$ there is an irreducible $g$-twisted representation $\beta_g$
localized in $I$, which is unique up to conjugation by a unitary.
Then $g\mapsto [\beta_g]\in \Out(\A(I))$ is a \emph{$G$-kernel} and it
follows that $\beta_g\beta_h=\Ad u_{g,h} \beta_{gh}$ for unitaries 
$(u_{g,h})_{g,h\in G}$ and that $\omega\colon G\times G\times G \to 
\TT$ defined by
$\omega(g,h,k)\cdot 1=u_{g,h}u_{gh,k}u_{g,hk}^{-1}\beta_g(u_{h,k})^{-1}$
is a cocycle.
The class $[\omega]$ is called the \textbf{anomaly} of $G$ 
and we say that $G$ acts \textbf{non-anomalous} if $\omega$ is a coboundary.
Furthermore, the \emph{orbifold} or \emph{fixed point net} $\A^G$ has a 
representation category $\Rep(\A^G)$ which is braided equivalent 
to the Drinfel'd center $\mathcal Z(G,\omega)= Z(\Vect_G^\omega)$.

We denote by $S_n$ the symmetric group on $\{1,\ldots,n\}$.
Let $\A$ be a holomorphic net and  $G\leq S_n$, then $G$ acts by permutation on 
$\A^{\otimes n}$. 
It seems to be widely believed that this action should be non-anomalous. 
But Johnson-Freyd argued that this conjecture is false \cite{Jo2017}
and we give a counter-example in the framework of conformal nets where  
the permutation action picks up what can be thought of a \textbf{gravitational anomaly}\footnote{\cf \cite[Section 1.4]{Wi2007} for how this name might be justified, namely he asks that our $k$ equals $0\pmod 3$ in order for the chiral CFT to be dual to quantum gravity.%
} $k\equiv c/8\pmod 3\in \ZZ_3$, where $c$ is the central charge of $\A$.

This note is an extension of an unpublished note (consisting essentially of 
Section \ref{sec:Unpublished}) circulated in 2017.
The results were announced April 15th, 2018 at the AMS Sectional Meeting
at Vanderbilt University, Nashville, TN.
Shortly after that, a more general result appeared in a preprint
by Evans and Gannon \cite{EvGa2018}.
\subsection*{Acknowledgements}
The original note is based on communication with 
Theo Johnson-Freyd
who told me that permutation orbifolds can be anomalous and gave me the 
counterexample arising from the $E_8$ lattice \cite{Jo2017}. 
I am thankful for the communication and explaining me his work.

\section{Cyclic permutations of order 3}
\label{sec:Unpublished}
\subsection{Twisted doubles of $\ZZ_3$.}
Recall that a unitary fusion category is called pointed if all simple objects are invertible.
Pointed unitary fusion categories with $\ZZ_3$-fusion rules are classified by 
$H^3(\ZZ_3,\TT)=\{[\cocyc i]:i\in \ZZ/3\ZZ\} \cong \ZZ_3$.
Their Drinfel'd centers $\mathcal{Z}(\ZZ_3,\cocyc i):= Z(\Vect^{\cocyc i}_G)$ are pointed. 
Indeed, it can be easily checked that they are braided equivalent to the 
pointed unitary modular tensor categories
$\cC(G_i,q_i)$, respectively, where $(G_i,q_i)$ are the metric groups
given in Table \ref{tab:Z3} and $\cC(G,q)$ is 
the braided fusion category associated to $(G,q)$, see
Appendix \ref{app}, in particular Proposition \ref{prop:PointedUMTC}.
\begin{table}[h!] 
\begin{tabular}{c|c|c}
  $i$ & $G_i$ & $q_i\colon G_i\to\QQ/\ZZ$\\
  \hline
  0 & $\ZZ_3\times \ZZ_3$ & $q_0(x,y) \equiv  xy/3\pmod 1$\\

  1 & $\ZZ_9$ & $q_1(x)\equiv4x^2/9\pmod 1$\\
  2 & $\ZZ_9$ & $q_2(x)\equiv8x^2/9\pmod 1$\\
\end{tabular}
\medskip
\caption{Twisted doubles of $\ZZ_3$}
  \label{tab:Z3}
\end{table}

\subsection{The anomaly}
By a conformal net we mean a diffeomorphism covariant net 
on the circle,  see \eg \cite{KaLo2006}.
Let $\A$ be a holomorphic conformal net.
Since $\A$ is diffeomorphism covariant we can assign a central charge $c>0$.
It is conjectured that if $\A$ is holomorphic, then $c\equiv 0\pmod 8$ and it is a theorem that $c\equiv 0\pmod 4$ \cite{KaLoXu2005}.
We will from now on assume that the central charge $c$ of $\A$ fulfills  $c\in 8\NN$.
If $\A$ is holomorphic, then any tensor power $\A^{\otimes n}$ is holomorphic 
\cite{KaLoMg2001}.
Let $\sigma\in S_n$ be a permutation. 
Then there is an element $\sigma\in\Aut(\A^{\otimes n})$ given by
\begin{align}
  x_1\otimes x_2\otimes \cdots \otimes x_n \mapsto 
  x_{\sigma(1)}\otimes x_{\sigma(2)}\otimes \cdots \otimes x_{\sigma(n)}
\end{align}
see \eg \cite{LoXu2004}.

\begin{prop} 
  Let $\A$ be a diffeomorphism covariant holomorphic net with $c=8k$, 
  and let $\ZZ_3\cong \langle \tau \rangle \leq \Aut(\A^{\otimes 3})$ be the group generated by the cyclic permutation $\tau=(123)$.
  Then the anomaly of $\langle \tau\rangle$ is $\cocyc {2k}$, %
  \ie
  $\tRep{\langle\tau\rangle}(\A^{\otimes 3})$ is tensor equivalent to  $\Vect_{\ZZ_3}^{\cocyc {2k}}$%
  and $\Rep((\A^{\otimes 3})^{\langle \tau\rangle})$ is braided equivalent to $
  \mathcal Z(\ZZ_3,\cocyc {2k})$.%
\end{prop}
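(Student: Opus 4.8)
The plan is to reduce the statement to the computation of a single invariant. Since $\A^{\otimes 3}$ is again holomorphic, the structure theory recalled in the introduction provides a unique class $[\cocyc i]\in H^3(\ZZ_3,\TT)$, $i\in\ZZ_3$, with $\tRep{\langle\tau\rangle}(\A^{\otimes 3})$ tensor equivalent to $\Vect_{\ZZ_3}^{\cocyc i}$ and, correspondingly, $\Rep((\A^{\otimes 3})^{\langle\tau\rangle})$ braided equivalent to $\mathcal{Z}(\ZZ_3,\cocyc i)\cong\cC(G_i,q_i)$ with $(G_i,q_i)$ as in Table \ref{tab:Z3} (Proposition \ref{prop:PointedUMTC}). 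The whole content is thus to show $i\equiv 2k\pmod 3$, for which it suffices to compute one invariant of $\cC(G_i,q_i)$ that already separates the three values of $i$.

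I would use the conformal spin (twist) for this. Reading off Table \ref{tab:Z3}, the simple objects lying over the generating twisted sector --- the preimage of $\tau$ under the canonical grading $\mathcal{Z}(\ZZ_3,\cocyc i)\to\ZZ_3$ --- carry twists $\e^{2\pi\ima q_i(a)}$, and the residue of $9\,q_i(a)$ modulo $3$ equals $i$ for every such $a$: for $i=0$ these twists are third roots of unity, whereas for $i=1,2$ they are genuine ninth roots of unity with exponent residues $1$ and $2$. Writing $h_\tau$ for the lowest conformal weight occurring in the $\tau$-twisted sector, the spin--statistics theorem of Guido and Longo identifies the twist of $\beta_\tau$ with $\e^{2\pi\ima h_\tau}$, so it remains only to show $9\,h_\tau\equiv 2k\pmod 3$.

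The value of $h_\tau$ is exactly the input drawn from Longo and Xu \cite{LoXu2004}. Their analysis realizes the $\tau$-twisted representation of $\A^{\otimes 3}$ through the covering $z\mapsto z^{3}$, under which the conformal Hamiltonian of the twisted sector becomes the rescaled Hamiltonian of $\A$ together with the universal central-charge (Schwarzian) shift $\frac{c}{24}\bigl(3-\tfrac13\bigr)$. Hence the ground state of the twisted sector has conformal weight $h_\tau=\frac{c}{24}\bigl(3-\tfrac13\bigr)=\frac{c}{9}=\frac{8k}{9}$, so that $9\,h_\tau=8k\equiv 2k\pmod 3$. Together with the preceding paragraph this yields $i\equiv 2k\pmod 3$, i.e. the anomaly is $\cocyc{2k}$, which is the assertion.

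The main obstacle is this middle step: making $h_\tau=\frac{c}{9}$ rigorous inside the operator-algebraic framework, namely extracting the $\frac{c}{24}(3-\tfrac13)$ shift from the Longo--Xu covering description and confirming that it is precisely this ground-state weight --- rather than some other weight present in the sector --- that enters the twist through spin--statistics. The remaining ingredients are the general orbifold dictionary and the bookkeeping of Table \ref{tab:Z3}.
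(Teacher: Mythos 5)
Your proposal is correct and follows essentially the same route as the paper: the key inputs are identical, namely the Longo--Xu computation of the twisted-sector conformal weights (their Theorem 6.3e, which gives $h_i=\tfrac{i}{3}+\tfrac{c}{24}\bigl(3-\tfrac13\bigr)=\tfrac{i}{3}+\tfrac{8k}{9}$) combined with the Guido--Longo spin--statistics theorem to read off the quadratic form, followed by matching against Table \ref{tab:Z3}. The only difference is cosmetic --- the paper records all three spins and identifies the full metric group $(G_{2k},q_{2k})$, while you extract the single mod-$3$ invariant $9h_\tau\equiv 2k$ from the lowest weight --- and the step you flag as the main obstacle is precisely what the cited Longo--Xu result supplies.
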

\begin{proof}
  It is enough to show that $\cC:=\Rep((\A^{\otimes 3})^{\langle \tau\rangle})$ is
  braided equivalent to $\cC(G_{2k},q_{2k})$. 
  But this follows from \cite[Theorem 6.3e]{LoXu2004} which gives that the 
  spins in
  $\Rep((\A^{\otimes 3})^{\langle \tau\rangle})$ coming from twisted sectors 
  $\alpha_i$ are $h_i=i/3+8k/9$ for $i=0,1,2$ and then 
  $q([\alpha_i])\equiv h_i\pmod 1$ by the spin--statistic theorem 
  \cite{GuLo1996}.
  This readily identifies $\cC$ to be braided equivalent with 
  $\cC(G_{2k},q_{2k})$.
\end{proof}
\begin{example}
  Let $\A_{E_8}$ be the conformal net associated with the even lattice $E_8$
  \cite{DoXu2006}.
  Then $\tRep{\langle\tau\rangle}(\A_{E_8}^{\otimes 3})$
  is tensor equivalent to $\Vect^{\omega_2}_{\ZZ_3}$ with $[\omega_2]$ a generator of $H^3(\ZZ_3,\TT)$.
  Thus $\Rep((\A_{E_8}^{\otimes 3})^\tau)$ is braided equivalent to 
  $\cZ(\ZZ_3,\omega_2)$.
\end{example}

\begin{example}
  Let $\A$ be a holomorphic net with central charge $c=8k$.
Let $S_3\leq \Aut(\A^{\otimes 3})$ be the group of all permutations.
Since $H^3(S_3,\TT))\cong H^3(\ZZ_3,\TT)\oplus H^3(\ZZ_2,\TT)$, where the 
isomorphism comes from restriction, it follows that $S_3$ is anomalous unless
$k=0\pmod 3$.
In particular,  $\Rep((\A_{E_8}^{\otimes 3})^{S_3})$ is braided equivalent to $\cZ(S_3,\tilde \omega)$ 
  for some $[\tilde\omega]\in H^3(S_3,\TT)$ of order 3.
\end{example}

In particular, the conjecture by M\"uger \cite[Appendix 5, Conjecture 6.3]{Tu2010} that states that for every completely rational conformal net $\A$ 
the category of $S_n$-twisted representations
$\tRep{S_n}(\A^{\otimes n })$ up to tensor equivalence depends only on the modular tensor category $\Rep(\A)$ is wrong.

\section{Cyclic holomorphic orbifolds}
The argument can be generalized to arbitrary cyclic extensions and we get
the following result.
\begin{prop}
  \label{prop:Cyclic}
  Let $\A$ be a holomorphic net with central charge $c=8k$ for some $k\in\NN$. 
  Let $\alpha$ be a cyclic permutation of order $n$ on 
  $\A^{\otimes m}$.
  Then the action of 
  $\langle\alpha\rangle\cong \ZZ_n$ on $\A^{\otimes m}$ is non-anomalous if and only if 
  $3\nmid n$ or $24 \mid c$.
\end{prop}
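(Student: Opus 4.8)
The plan is to reduce to a single $n$-cycle, identify the orbifold as the Drinfel'd center of some $\Vect_{\ZZ_n}^{\cocyc p}$, and read off the class $p\in H^3(\ZZ_n,\TT)\cong\ZZ_n$ from the conformal spins, exactly as in the order-$3$ computation above. First I would dispose of the ambient factors: writing $\alpha$ as an $n$-cycle on $n$ of the $m$ tensor legs, the remaining $m-n$ legs are fixed, so $(\A^{\otimes m})^{\langle\alpha\rangle}\cong (\A^{\otimes n})^{\langle\alpha\rangle}\otimes \A^{\otimes(m-n)}$; since $\A^{\otimes(m-n)}$ is again holomorphic, tensoring with it leaves the $\langle\alpha\rangle$-twisted representation category unchanged, so I may assume $m=n$. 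By the general theory recalled above, $\tRep{\langle\alpha\rangle}(\A^{\otimes n})\simeq\Vect_{\ZZ_n}^{\cocyc p}$ for a unique anomaly $[\cocyc p]\in\ZZ_n$, and the action is non-anomalous precisely when $p\equiv 0$.

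Next I would compute the spins. By the ground-state energy formula for cyclic permutation orbifolds \cite[Thm.~6.3]{LoXu2004}, the $\alpha^a$-twisted sector has minimal conformal weight $h_a=\tfrac{c}{24}\bigl(n-\tfrac{d^2}{n}\bigr)$ with $d=\gcd(a,n)$, and within a fixed flux the $\ZZ_n$-charge shifts the weight by $\tfrac1n$; by spin--statistics \cite{GuLo1996} these are the spins of the corresponding sectors of the pointed modular category $\Rep((\A^{\otimes n})^{\langle\alpha\rangle})$. For the flux generator ($a=1$) this gives $h_1=\tfrac{c}{24}\bigl(n-\tfrac1n\bigr)=\tfrac{k(n^2-1)}{3n}$ with $c=8k$, and the flux-$1$ sectors carry spins $\{\,h_1+\tfrac jn:j\in\ZZ_n\,\}$. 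Comparing with the spin $\exp\!\bigl(2\pi\ima(\tfrac{p}{n^2}+\tfrac jn)\bigr)$ of the flux-$1$ sectors of $\cZ(\ZZ_n,\cocyc p)$ identifies the anomaly as $p\equiv n^2h_1\equiv \tfrac{k\,n(n^2-1)}{3}\pmod n$; for $n=3$ this recovers $p\equiv 2k\pmod 3$ from the order-$3$ case.

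Finally I would run the number theory. Since $(n-1)n(n+1)$ is a product of three consecutive integers, $Y:=\tfrac{n(n^2-1)}{3}$ is an integer and $p\equiv kY\pmod n$, so $p\equiv 0$ iff $n\mid kY$. If $3\nmid n$ then $3\mid n^2-1$, hence $n\mid Y$ and $p\equiv 0$ for every $k$. If $3\mid n$ then $Y=\tfrac n3(n^2-1)$ with $3\nmid n^2-1$, so $n\mid kY$ iff $3\mid k$, i.e.\ iff $24\mid c$. This yields non-anomalous $\iff 3\nmid n$ or $24\mid c$. The main obstacle is the identification in the second step: one must verify that the pointed metric group $(G,q)$ determined by the spins $h_a$ and the flux--charge pairing is the untwisted Drinfel'd double precisely when $n h_1\in\ZZ$, equivalently that the flux generator admits a bosonic representative of order $n$ so that the flux--charge extension $0\to\ZZ_n\to G\to\ZZ_n\to 0$ splits. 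Checking that the spin datum alone pins down the cohomology class, together with fixing the sign and normalization conventions matching the $n=3$ table, is the delicate point.
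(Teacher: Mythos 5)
Your argument is correct and lands on the same two pillars as the paper's proof --- the Longo--Xu ground-state-energy formula $h_1=\tfrac{n^2-1}{24n}c$ together with the spin--statistics theorem, followed by elementary arithmetic in $\ZZ_n$ --- but the step that converts the spin spectrum into a cohomology class is organized differently. The paper treats $3\nmid n$ and $n=3m$ separately: in the first case a bosonic flux generator exists and Lemma \ref{lem:CyclicExtension} forces the metric group to be $(\ZZ_n\times\hat\ZZ_n,q_{\mathrm{st}})$, while for $n=3m$ it first argues that the class is $3$-torsion and then distinguishes the three remaining possibilities by matching the twists against the explicit Lagrangian extensions $(A_\pm,q_\pm,\iota_\pm)$ of Example \ref{ex:OrderThree}, using the isomorphism $\Lex(\hat B)\cong H^3(B,\TT)_{\mathrm{ab}}$ from the appendix. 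You instead invoke the twist formula $\theta_{(1,j)}=\exp\bigl(2\pi\ima(\tfrac{p}{n^2}+\tfrac{j}{n})\bigr)$ for the flux-one simples of $\cZ(\ZZ_n,\cocyc p)$ and read off $p\equiv n^2h_1\equiv k\,n(n^2-1)/3\pmod n$ uniformly in $n$; this is consistent, and for $n=3m$ it reproduces the paper's $[\omega_k]=-km[\omega_0]$ since $km(n^2-1)\equiv -km\pmod{3m}$. What your route buys is a single closed formula for the anomaly with no case split and no order-$3$ reduction; what it costs is that the whole proof now rests on the normalization of the twisted-double twist formula, which is exactly the ``delicate point'' you flag and exactly what the paper's appendix is there to pin down. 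Note that for the ``non-anomalous $\Rightarrow$ integral flux-one spins'' direction you need only the forward implication of that formula, while the converse (a bosonic flux generator of full order splits the flux--charge extension) is precisely Lemma \ref{lem:CyclicExtension}; citing these two facts would close your argument completely. Your preliminary reduction to $m=n$ via $(\A^{\otimes m})^{\langle\alpha\rangle}\cong(\A^{\otimes n})^{\langle\tau_n\rangle}\otimes\A^{\otimes(m-n)}$ is the same observation the paper makes when it notes that any cyclic permutation is conjugate to $\tau_n\otimes\id$.
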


\subsection{Cyclic homolorphic twisted orbifolds}
We have the following application of 
  Proposition \ref{prop:Cyclic}.

If $\A$ is holomorphic and $G\leq \Aut(\A)$ 
non-anomalous we can form the so-called \textbf{twisted orbifold}
$\A\CS G$ as described in \cite{Bi2018}
by lifting the $G$-kernel given by $\tRep G(\A)$
to a homorphism $G\hookrightarrow\Aut(\A(I))$
or in other words by choosing a trivilization.

In our concrete case, this can be easier described. 
Namely,
$\Rep((\A^{\otimes m})^{\langle\alpha\rangle})$ is braided equivalent to 
$\cC(\ZZ_n\times \hat \ZZ_n,q_\mathrm{st})$ with the quadratic form 
$q_\mathrm{st}(g,\chi)=\chi(g)$,
such that the Lagrangian subgroup 
$\ZZ_n\times \{\chi_0\}$ gives $\A^{\otimes m}$.
We have a second Lagrangian subgroup $\{0\}\times \hat \ZZ_n$ 
which gives a new holomorphic net $\A^{\otimes n}\CS \langle\alpha\rangle$ which 
is the \textbf{twisted orbifold net} $\A^{\otimes m}\CS \langle\alpha\rangle$
of $\A^{\otimes m}$ 
with respect to $\langle\alpha\rangle$.
Thus we have:
\begin{prop}
  Let $\A$ be a holomorphic net with central charge $c\in 8\NN$. 
  Let $\alpha$ be a cyclic permutation of order $n$ on 
  $\A^{\otimes m}$.
  If 
  $3\nmid n$ or $24 \mid c$, we have a holomorphic net
  given by the twisted orbifold 
  $\A^{\otimes m}\CS \langle\alpha\rangle$.
\end{prop}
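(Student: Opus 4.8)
The plan is to realize $\A^{\otimes m}\CS\langle\alpha\rangle$ as the local extension of the orbifold net $\cB:=(\A^{\otimes m})^{\langle\alpha\rangle}$ determined by a suitable Lagrangian subgroup of its (pointed) representation category, and then to read off holomorphicity directly from the Lagrangian condition. First I would invoke \rprop{Cyclic}: the hypothesis $3\nmid n$ or $24\mid c$ is exactly the condition found there for $\langle\alpha\rangle\cong\ZZ_n$ to act non-anomalously, so its anomaly class in $H^3(\ZZ_n,\TT)$ vanishes. By the discussion at the start of the paper this makes the twisted orbifold $\A^{\otimes m}\CS\langle\alpha\rangle$ well defined (by lifting the $\ZZ_n$-kernel as in \cite{Bi2018}) and gives that $\Rep(\cB)$ is braided equivalent to the untwisted Drinfel'd double $\cZ(\ZZ_n,\cocyc 0)=Z(\Vect_{\ZZ_n})$. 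By Proposition \ref{prop:PointedUMTC} this double is the pointed unitary modular tensor category $\cC(\ZZ_n\times\hat\ZZ_n,q_{\mathrm{st}})$ with $q_{\mathrm{st}}(g,\chi)=\chi(g)$.

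Next I would check that $\{0\}\times\hat\ZZ_n$ is a Lagrangian subgroup of the metric group $(\ZZ_n\times\hat\ZZ_n,q_{\mathrm{st}})$: the quadratic form restricts trivially to it since $q_{\mathrm{st}}(0,\chi)=\chi(0)=0$ for all $\chi$, and it has order $n=\sqrt{n^2}$, so it coincides with its own orthogonal complement and is isotropic. The associated direct sum of invertible objects is therefore a commutative (local) Lagrangian $Q$-system in $\Rep(\cB)$. Since $\cB$ is completely rational, this $Q$-system is realized by an irreducible finite-index local extension of $\cB$, which by the trivialization described above is precisely the twisted orbifold $\A^{\otimes m}\CS\langle\alpha\rangle$. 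Finally, because the subgroup is Lagrangian, the category of local $Q$-system modules---equivalently $\Rep(\A^{\otimes m}\CS\langle\alpha\rangle)$---is trivial, so the extension is holomorphic.

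The main obstacle is the passage from the abstract Lagrangian subgroup to an honest conformal net: one must know both that every commutative $Q$-system in the representation category of a completely rational net integrates to a genuine local extension, and that the Lagrangian (maximal isotropic) condition forces the category of local modules to be $\Vect$. Both inputs come from the theory of local extensions and Longo--Rehren subfactors underlying \cite{Bi2018}; once they are granted, the remaining steps---isotropy and maximality of $\{0\}\times\hat\ZZ_n$, and the identification of the resulting extension with $\A^{\otimes m}\CS\langle\alpha\rangle$---are routine.
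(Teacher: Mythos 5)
Your proposal is correct and follows essentially the same route as the paper: the paper likewise uses the non-anomalous condition from Proposition \ref{prop:Cyclic} to identify $\Rep((\A^{\otimes m})^{\langle\alpha\rangle})$ with $\cC(\ZZ_n\times\hat\ZZ_n,q_{\mathrm{st}})$ and then takes the local extension along the second Lagrangian subgroup $\{0\}\times\hat\ZZ_n$, which is holomorphic precisely because that subgroup is Lagrangian. The paper treats this as immediate from the surrounding discussion; you have merely spelled out the standard $Q$-system/local-extension inputs it leaves implicit.
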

\begin{example}
  $(\A_{E_8}\otimes \A_{E_8})/\!\!/\langle\tau_2\rangle$
is isomorphic to $\A_{D^+_{16}}$.
\end{example}

\subsection{Determining the anomalies}
We now proceed to prove Proposition \ref{prop:Cyclic}.
Let $\A$ be a holomorphic net and let $\tau_n$ be the cyclic permutation 
\begin{align}
  x_1\otimes x_2\otimes \cdots \otimes x_n \mapsto x_2\otimes x_3\otimes \cdots \otimes x_1\,.
\end{align}
Then  $\tau_n$ yields an inner symmetry
$\tau_n\in\Aut(\A^{\otimes n})$, see \eg \cite{LoXu2004}.

For $G\leq \Aut(\A)$ and $g\in G$ 
we denote by $\Rep(\A^G)_g$ the category of representations 
coming from restrictions of $\tRep g(\A)$.

\begin{lem}
  Let $\A$ be a holomorphic net with central charge $c=8k$ for some 
  $k\in\NN$.
  \begin{enumerate}
    \item 
      For $3\nmid n$  the spectrum of $h_\alpha$ with $\alpha\in
  \Rep((\A^{\otimes n})^{\langle\tau_n\rangle})_{\tau_n}$  
  is $\{0,\frac1n,\ldots,\frac{n-1}n\}\pmod 1$.
    \item 
      For $n=3m$ the spectrum of $h_\alpha$ with 
  $\alpha\in
  \Rep((\A^{\otimes n})^{\langle\tau_n\rangle})_{\tau_n}$  
  is $\{0,\frac1n,\ldots,\frac{n-1}n\}\pmod 1$ if and only if $c\equiv0\pmod {24}$.
    \item
  For $k= \pm1\equiv \frac c8\pmod 3$ there is an 
  $\alpha\in \Rep((\A^{\otimes n})^{\langle\tau_n\rangle})_{\tau_n}$  
  with $h_\alpha\equiv - \frac k{3n}\pmod 1$.
  \end{enumerate}
\end{lem}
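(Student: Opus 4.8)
The plan is to reduce all three assertions to a single spin formula and then do elementary modular arithmetic. The only substantive input is the explicit ground-state energy of the cyclic twisted sector, which is exactly what \cite[Theorem 6.3]{LoXu2004} supplies: for a holomorphic net with central charge $c$ the $\tau_n$-twisted representations split into $n$ irreducibles $\alpha_0,\dots,\alpha_{n-1}$ whose lowest conformal weights are
\begin{align}
  h_{\alpha_i}\equiv \frac{i}{n}+\frac{c(n^2-1)}{24n}\pmod 1,\qquad i=0,1,\dots,n-1,
\end{align}
the term $\frac{c(n^2-1)}{24n}$ being the twisted ground-state energy and $\frac{i}{n}$ the contribution of the $\ZZ_n$-charge grading inside the twisted sector (this specializes to $h_i=i/3+8k/9$ at $n=3$, matching the proof of the first Proposition). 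By the spin--statistics theorem \cite{GuLo1996} these $h_{\alpha_i}$ are precisely the spins realized by $\Rep((\A^{\otimes n})^{\langle\tau_n\rangle})_{\tau_n}$, so it suffices to analyze the above set modulo $1$. Substituting $c=8k$ and clearing denominators, I rewrite it as
\begin{align}
  h_{\alpha_i}\equiv \frac{3i+k(n^2-1)}{3n}\pmod 1.
\end{align}
Granting the displayed formula, which is where the real analytic content sits, all three items are purely arithmetic.

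For item (1) I assume $3\nmid n$, so $n^2\equiv 1\pmod 3$ and hence $3\mid(n^2-1)$. Writing $n^2-1=3t$ gives $h_{\alpha_i}\equiv \frac{i+kt}{n}\pmod 1$. As $i$ runs through $\{0,\dots,n-1\}$ the residues $i+kt$ run through a complete residue system modulo $n$, so the spectrum is $\{0,\tfrac1n,\dots,\tfrac{n-1}n\}\pmod 1$, as claimed. For item (2) I let $n=3m$, so $3\mid n$ forces $n^2-1\equiv-1\pmod 3$. Over the common denominator $3n$ the spectrum is the coset $\{\,k(n^2-1)+3i:i=0,\dots,n-1\,\}\pmod{3n}$, a coset of the subgroup of multiples of $3$ in $\ZZ/3n\ZZ$. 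It coincides with that subgroup, i.e.\ with $\{0,\tfrac1n,\dots,\tfrac{n-1}n\}$, if and only if $3\mid k(n^2-1)$; since $n^2-1\equiv-1\pmod 3$ this holds exactly when $3\mid k$, that is when $c=8k\equiv 0\pmod{24}$.

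For item (3) I single out the sector $\alpha_0$, which carries $h_{\alpha_0}\equiv \frac{k(n^2-1)}{3n}\pmod 1$. In the regime $3\mid n$ relevant here one has $\frac{kn}{3}\in\ZZ$, so
\begin{align}
  h_{\alpha_0}\equiv \frac{kn^2-k}{3n}=\frac{kn}{3}-\frac{k}{3n}\equiv -\frac{k}{3n}\pmod 1,
\end{align}
which is the desired sector. Observe that for $3\nmid k$ the value $-\frac{k}{3n}$ is not an integer multiple of $\frac1n$, so it can be realized only when $3\mid n$; this is precisely the feature that will detect the anomaly in Proposition \ref{prop:Cyclic}. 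Thus the main obstacle is not the arithmetic but the justification of the Longo--Xu spin formula, namely the computation of the twisted ground-state energy $\frac{c(n^2-1)}{24n}$ together with the identification of the $\ZZ_n$-charge splitting $\frac{i}{n}$; once that is granted, the three assertions follow as shown.
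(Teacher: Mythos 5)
Your proof is correct and follows essentially the same route as the paper: both take the Longo--Xu spin formula $h_i\equiv \frac{i}{n}+\frac{c(n^2-1)}{24n}$ as the sole analytic input and reduce all three items to modular arithmetic after substituting $c=8k$. The only difference is cosmetic (you work uniformly over the denominator $3n$ and phrase (2) as a coset condition, while the paper splits into the cases $n=3\ell\pm1$ and $n=3\ell$), and you correctly note, as the paper implicitly does, that item (3) lives in the regime $3\mid n$.
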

\begin{proof}
  Let $n=3\ell\pm 1$. 
  Then using \cite[Theorem 6.3e]{LoXu2004} we have
  \begin{align} 
    h_i&=\frac{i}n+\frac{n^2-1}{24n}c  
    =\frac{i+k\ell(3\ell\pm 2)}n
  \end{align}
  thus we have $(1)$.
  Now let $n=3\ell$, then 
  \begin{align}
    h_i&=\frac{i}n+\frac{n^2-1}{24n}c  
    =\frac{i}n+\frac{9\ell^2-1}{9\ell}k
    \\&=\begin{cases}
      \frac{i-m}n \pmod 1
 &k=3m \\
  -\frac{3(m-i)\pm 1}{3n}\pmod 1 
 &k=3m\pm 1\,.
    \end{cases}
  \end{align}
  Thus we have (2) and 
  since $3\nmid 3(m-i)\pm 1$ we get (3).
\end{proof}

\begin{prop}
  Let $\A$ be a holomorphic net.
  If $3\nmid n$ then $\tRep{\langle\tau_n\rangle}(\A)$ is tensor equivalent to $\Vect_{\ZZ_n}$.
\end{prop}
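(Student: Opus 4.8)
The plan is to reduce the statement to the vanishing of the anomaly and then to read this off from the spins computed in the preceding Lemma, exactly as in the order-$3$ case treated above. By the general correspondence recalled in the introduction there is a unique class $[\omega]\in H^3(\ZZ_n,\TT)\cong\ZZ_n$ with $\tRep{\langle\tau_n\rangle}(\A^{\otimes n})$ tensor equivalent to $\Vect_{\ZZ_n}^{\omega}$, and equivalently with $\Rep((\A^{\otimes n})^{\langle\tau_n\rangle})$ braided equivalent to the twisted double $\cZ(\ZZ_n,\omega)$. Since the assertion to be proved is that the associator is trivial, it suffices to show $[\omega]=0$; by uniqueness of the anomaly this is the same as identifying $\cZ(\ZZ_n,\omega)$ with the untwisted double $\cZ(\ZZ_n,\cocyc 0)$.

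First I would record that $\cZ(\ZZ_n,\omega)$ is a pointed modular tensor category, hence of the form $\cC(A,q)$ for a metric group $(A,q)$ with $|A|=n^2$ (Proposition~\ref{prop:PointedUMTC}). The twisted-sector grading gives a surjection $\pi\colon A\to\ZZ_n$ whose kernel is the isotropic subgroup $\hat\ZZ_n\cong\Rep(\ZZ_n)$ coming from the untwisted sector, and the simple objects of $\Rep((\A^{\otimes n})^{\langle\tau_n\rangle})_{\tau_n}$ are precisely the fibre $\pi^{-1}(\tau_n)$. By the spin--statistics theorem \cite{GuLo1996} the quadratic form $q$ restricted to this fibre is given by the conformal weights, so by part~(1) of the Lemma it takes exactly the values $\{0,\tfrac1n,\ldots,\tfrac{n-1}n\}\pmod 1$ whenever $3\nmid n$. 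These agree with the spins of the $\tau_n$-graded objects of the untwisted double $\cC(\ZZ_n\times\hat\ZZ_n,q_{\mathrm{st}})$ with $q_{\mathrm{st}}(g,\chi)=\chi(g)$.

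The remaining step — and the point I expect to be the main obstacle — is to argue that this spin spectrum forces $(A,q)\cong(\ZZ_n\times\hat\ZZ_n,q_{\mathrm{st}})$, equivalently that any nontrivial anomaly $\cocyc i$ would move the conformal weights on the $\tau_n$-fibre out of $\tfrac1n\ZZ$. Concretely, for the twisted double of $\ZZ_n$ the extension $0\to\hat\ZZ_n\to A\to\ZZ_n\to 0$ and the restriction of $q$ to $\pi^{-1}(\tau_n)$ acquire a correction of order $i/n^2$ governed by the transgression of $\cocyc i$, so that the $\tau_n$-spins lie in $\tfrac1n\ZZ$ if and only if $i\equiv 0$; this is the exact analogue of Table~\ref{tab:Z3}, where already for $\ZZ_3$ the nontrivial classes force the cyclic group $\ZZ_9$ and denominators $9$. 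I expect controlling this correction to be delicate, since for composite $n$ one must understand the full structure of $H^3(\ZZ_n,\TT)$ and of the associated twisted doubles rather than only the prime case. Granting it, part~(1) of the Lemma yields $i=0$, whence $\cZ(\ZZ_n,\omega)\cong\cZ(\ZZ_n,\cocyc 0)$, so $[\omega]=0$ and $\tRep{\langle\tau_n\rangle}(\A^{\otimes n})\cong\Vect_{\ZZ_n}$.
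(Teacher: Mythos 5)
Your reduction is the same one the paper uses: identify $\Rep((\A^{\otimes n})^{\langle\tau_n\rangle})$ as a pointed category $\cC(A,q)$ with $|A|=n^2$ (Proposition \ref{prop:PointedUMTC}), graded over $\ZZ_n$ with the untwisted sector as the Lagrangian kernel, and read $q$ on the $\tau_n$-fibre off the conformal weights via spin--statistics. The problem is that you stop exactly where the proof has to happen: the assertion that the spin spectrum $\{0,\tfrac1n,\ldots,\tfrac{n-1}{n}\}$ forces $(A,q)\cong(\ZZ_n\times\hat\ZZ_n,q_{\mathrm{st}})$ is introduced with ``Granting it,'' supported only by an appeal to an unspecified correction ``governed by the transgression of $\cocyc i$.'' That is the entire mathematical content of the proposition; without it you have only restated the general existence of some anomaly class $[\omega]$.

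The missing step is precisely Lemma \ref{lem:CyclicExtension} of the appendix, and it is considerably easier than you anticipate: no analysis of $H^3(\ZZ_n,\TT)$ for composite $n$ and no classification of twisted doubles of $\ZZ_n$ is required. All you need from part (1) of the spin lemma is a \emph{single} twisted sector $x$ lying over a generator of $\ZZ_n$ with $q(x)=0$. Then $q(mx)=m^2q(x)=0$ for all $m$, so $\langle x\rangle$ is isotropic and hence of order at most $n$; since $x$ maps to a generator its order is a multiple of $n$, so $\ord(x)=n$ and $\langle x\rangle$ is a cyclic isotropic complement to the Lagrangian $\iota(\ZZ_n)$. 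Non-degeneracy of $\partial q$ makes $\chi(\slot)=\partial q(\iota(\slot),x)$ a generating character of $\ZZ_n$, and $\iota(a)+bx\mapsto(a,b\chi)$ is an isomorphism of metric groups onto $(\ZZ_n\times\hat\ZZ_n,q_{\mathrm{st}})$. This Lagrangian extension is the unit of $\Lex(\hat\ZZ_n)$, hence corresponds to the trivial class in $H^3(\ZZ_n,\TT)$, giving $\tRep{\langle\tau_n\rangle}(\A^{\otimes n})\simeq\Vect_{\ZZ_n}$. Supply this splitting argument (or cite the appendix lemma) and your proof closes; as written, the decisive step is asserted rather than proved.
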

\begin{proof}
  Since there is a $\tau_n$-twisted representation  $\beta$  with $h_\beta=0\pmod 1$ from Lemma \ref{lem:CyclicExtension} it follows that 
  $\Rep((\A^\otimes n )^{\langle\tau_n\rangle})$ is braided equivalent to
  $\cC(\ZZ_n\times\hat\ZZ_n,q)$
  and because the Lagrangian subgroup lives in the zero graded part we have 
   $\tRep{\langle\tau_n\rangle}(\A)$ is tensor equivalent to $\Vect_{\ZZ_n}$
   again by Lemma \ref{lem:CyclicExtension}.
\end{proof}
\begin{prop}
  \label{prop:Cocycles}
  Let $\A$ be a holomorphic net of central charge $c=8k$
  and $n=3m$ for some $m,k\in\NN$.
  \begin{enumerate}
    \item 
$\Rep((\A^{\otimes n})^{\langle\tau_n\rangle})$ is braided equivalent 
  to $\cC(\ZZ_{9m}\otimes \ZZ_m,q_\mp)$ 
  with 
  \begin{align}
    q_\pm(x,y) = \frac{\pm x^2}{9m}\mp\frac{y^2}{m}
  \end{align}
  for $k\equiv \pm 1\pmod 3$.
\item    $\Rep((\A^{\otimes n})^{\langle\tau_n\rangle})$ is braided equivalent 
  to $\cC(\ZZ_{3m}\otimes \hat \ZZ_{3m},q)$ for $k\equiv 0\pmod  3$.
    \item  $\tRep{\langle\tau_n\rangle}(\A)$ is tensor equivalent to
      $\Vect_{\ZZ_n}^{\omega_k}$ with 
      $[\omega_k]=-km[\omega_0]$ for a generator $[\omega_0]$
      of $H^3(\ZZ_n,\QQ/\ZZ)\cong \ZZ_n$. 
  \end{enumerate}
\end{prop}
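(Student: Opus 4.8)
The plan is to identify the pointed modular tensor category $\cC:=\Rep((\A^{\otimes n})^{\langle\tau_n\rangle})$, which by the general orbifold result recalled in the introduction is braided equivalent to $\cZ(\ZZ_n,\cocyc k)=Z(\Vect_{\ZZ_n}^{\cocyc k})$ for the anomaly class $[\cocyc k]\in H^3(\ZZ_n,\TT)\cong\ZZ_n$ of the permutation action. Since all three assertions describe this single category, I would first pin down the class $[\cocyc k]$, which is statement (3), and then read off the metric group in the two cases, which are (2) and (1). I normalize the generator $[\cocyc0]$ of $H^3(\ZZ_n,\TT)$ so that in $\cZ(\ZZ_n,\cocyc j)$ a simple object of the $\tau_n$-graded sector has spin $\equiv j/n^2\pmod{1/n}$; this fractional part lies in $\tfrac1{n^2}\ZZ/\tfrac1n\ZZ\cong\ZZ_n$ and is a complete invariant of the anomaly.

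For (3), I read the spins of the $\tau_n$-graded sectors off the Lemma above: with $n=3m$ they are $h_i\equiv \tfrac in-\tfrac{km}{n^2}\pmod1$ for $i\in\ZZ_n$. By the spin--statistics theorem \cite{GuLo1996} these are the values of the quadratic form on the generator-graded sector, so modulo $\tfrac1n$ the spin equals $-\tfrac{km}{n^2}$. Comparing with the normalized invariant $j/n^2$ forces $j\equiv-km\pmod n$, that is $[\cocyc k]=-km[\cocyc0]$, giving (3). Part (2) is then immediate: if $k\equiv0\pmod3$ then $-km\equiv0\pmod n$ since $3m=n$, the anomaly is trivial, and $\cZ(\ZZ_n,\cocyc k)\simeq Z(\Vect_{\ZZ_n})\simeq\cC(\ZZ_n\times\hat\ZZ_n,q_\mathrm{st})$ with $q_\mathrm{st}(g,\chi)=\chi(g)$, the untwisted holomorphic double.

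The substance is (1), where $j=-km\equiv\mp m\pmod n$ for $k\equiv\pm1\pmod3$ and I must produce the metric group of $\cZ(\ZZ_n,\cocyc j)$ explicitly. I would realize the simple objects of the twisted double as pairs $(a,s)\in\ZZ_n\times\ZZ_n$: because $H^2(\ZZ_n,\TT)=0$, the transgression of $\cocyc j$ at each $a$ is a coboundary $\delta g_a$ with $g_a(t^b)=\exp(2\pi i\,jab/n^2)$, so the twisted characters are $\chi_{a,s}=g_a\psi_s$ with $\psi_s$ the ordinary characters. The ribbon twist is $q(a,s)=\chi_{a,s}(t^a)=\tfrac{ja^2}{n^2}+\tfrac{as}{n}\pmod1$, and the monodromy yields the bilinear form $b((a,s),(a',s'))=\tfrac{2jaa'}{n^2}+\tfrac{a's+as'}{n}$, for which $b(x,x)=2q(x)$ as required.

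The main obstacle — and the reason the anomaly is visible — is that the fusion is not the naive product on labels: the carry cocycle $\epsilon(a,a')$ enters, with coefficient fixed by the double's associativity to $2j$, so $(a,s)\cdot(a',s')=(a+a'\bmod n,\ s+s'+2j\,\epsilon(a,a'))$, and this alters the group structure. Computing orders under this twisted law, the element $x=(1,0)$ has order $9m$, while $y=(3,\pm2)$ generates the orthogonal complement $\langle x\rangle^\perp$ and has order $m$; hence $A\cong\ZZ_{9m}\times\ZZ_m$ rather than $\ZZ_n\times\ZZ_n$. Evaluating the form on these generators gives $q(x)=\mp\tfrac1{9m}$, $q(y)=\pm\tfrac1m$ and $b(x,y)=0$, so $q$ diagonalizes to $q_\mp(x,y)=\tfrac{\mp x^2}{9m}\pm\tfrac{y^2}{m}$ and $\cC\simeq\cC(\ZZ_{9m}\times\ZZ_m,q_\mp)$, which is (1). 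The only delicate point is the usual $2$-adic bookkeeping when $m$ is even, where the explicit generator $y$ may need to be replaced via the orthogonal-decomposition theorem for finite quadratic forms; this leaves the isomorphism class of the metric group unchanged.
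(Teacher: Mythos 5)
Your argument is correct in substance but follows a genuinely different route from the paper's. The paper's proof of (1) and (3) first bounds the order of the anomaly: regrouping $\A^{\otimes 3m}=(\A^{\otimes 3})^{\otimes m}$ identifies $\tau_{3m}^{3}$ with the cyclic permutation of the $m$ blocks, and since $\A^{\otimes 3}$ has central charge $24k$ that restricted action is non-anomalous, so the class dies under restriction $H^3(\ZZ_{3m},\TT)\to H^3(\ZZ_{m},\TT)$ and must lie in $\{0,\pm m[\omega_0]\}$; the three candidates are then the Lagrangian extensions $(A_0,q_0)$, $(A_\pm,q_\pm)$ of Example \ref{ex:OrderThree}, and the spin spectrum from the preceding Lemma selects the correct one. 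You bypass the order-three reduction entirely: you compute $\cZ(\ZZ_n,\omega_j)$ for every $j$, observe that the spin of the $\tau_n$-graded sector modulo $\tfrac1n$ equals $j/n^2$ and is a complete invariant, and read off $j\equiv -km$ directly from the spin formula. This is more self-contained (it re-derives the content of Example \ref{ex:OrderThree} rather than citing it), at the cost of having to get the twisted-double formulas exactly right; your carry coefficient $2j$ is indeed forced by $q(Nx)=N^2q(x)$, and your generators $x=(1,0)$ of order $9m$ and $y=(3,\pm2)$ of order $m$, with $b(x,y)=0$, $q(x)=\mp\tfrac1{9m}$, $q(y)=\pm\tfrac1m$, check out when $m$ is odd.

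The one genuine gap is your closing remark about even $m$. There the splitting really does fail ($y^{m/2}=x^{9m/2}$, so $\langle x\rangle\cap\langle y\rangle\neq 0$), and no change of generators can rescue the diagonal form: for $m$ even the bilinear form $\partial q_\pm$ on $\ZZ_{9m}\oplus\ZZ_m$ is degenerate (both $(9m/2,0)$ and $(0,m/2)$ lie in its radical), so $\cC(\ZZ_{9m}\oplus\ZZ_m,q_\pm)$ is not modular and cannot be braided equivalent to the orbifold category; the actual double carries a non-split form with a cross term on the $2$-primary part. Hence the assertion that the ``$2$-adic bookkeeping'' leaves the isomorphism class of the metric group unchanged is not correct as stated. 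To be fair, this defect is already present in the explicit formula of part (1) and in Example \ref{ex:OrderThree} of the paper, so your method exposes rather than creates it; a clean write-up should either restrict the explicit presentation to odd $m$ or replace $q_\pm$ by the correct non-diagonal form, while the cohomological statement (3), which is what the rest of the paper actually uses, survives intact by your argument.
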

In Figure \ref{fig:1}, we demonstrate the twisted fusion rules 
depending on $k$ in an example.

\begin{proof} 
  (2) is proved as before.
  We note that the cocycle has order three, since $\tau_{3\ell}$
  equals $\tau_\ell$ on $(\A^{\otimes 3})^{\otimes\ell}$ since $\A^{\otimes 3}$ has central 
  central charge $c=24k$. 
  So there are only two choices for the cocycle which are distinguished by 
  the values of $h$, see Appendix \ref{app}, which proves (1) and (3).
\end{proof}
\begin{cor}
  Let $\A$ be a holomorphic net. The action of 
  $\langle\tau_n\rangle\cong \ZZ_n$ on $\A^{\otimes n}$ is non-anomalous if and only if 
  $3\nmid n$ or $24 \mid c$.
\end{cor}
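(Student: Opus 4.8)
The plan is to read off the statement directly from the anomaly class computed in Proposition \ref{prop:Cocycles}, combined with the untwisted case already disposed of. Recall that, by the definition of the anomaly given in the introduction, $\langle\tau_n\rangle$ acts non-anomalously precisely when its class in $H^3(\ZZ_n,\TT)\cong\ZZ_n$ vanishes, and that this class is the associator of the pointed fusion category $\tRep{\langle\tau_n\rangle}(\A^{\otimes n})\simeq\Vect_{\ZZ_n}^{\omega}$. So the whole task reduces to deciding when $[\omega]=0$, splitting on whether $3\mid n$.

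First I would settle the case $3\nmid n$. Here the earlier proposition gives that $\tRep{\langle\tau_n\rangle}(\A^{\otimes n})$ is tensor equivalent to the \emph{untwisted} category $\Vect_{\ZZ_n}$, so the anomaly class is trivial and the action is non-anomalous with no condition on $c$. This is exactly the ``$3\nmid n$'' alternative of the statement. For the complementary case $n=3m$ I would invoke Proposition \ref{prop:Cocycles}(3), which identifies the anomaly as $[\omega_k]=-km[\omega_0]$, with $[\omega_0]$ a generator of $H^3(\ZZ_n,\QQ/\ZZ)\cong\ZZ_{3m}$. This class vanishes if and only if $km\equiv 0\pmod{3m}$, i.e.\ if and only if $3\mid k$. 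Since $c=8k$, the divisibility $3\mid k$ is equivalent to $24\mid c$, which is the ``$24\mid c$'' alternative. Assembling the two cases yields precisely: non-anomalous if and only if $3\nmid n$ or $24\mid c$.

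The only point requiring care is the bookkeeping: one must check that the two invoked propositions exhaust all $n$ (they do, via the dichotomy $3\mid n$ versus $3\nmid n$) and that the arithmetic condition $3\mid k$ passes correctly through $c=8k$ to $24\mid c$. The genuine content — the Longo--Xu computation of the conformal spins $h_i=\tfrac{i}{n}+\tfrac{n^2-1}{24n}c$, and the resulting identification of the metric group $(G,q)$ and hence of the cocycle class — has already been carried out in the Lemma and in Proposition \ref{prop:Cocycles}. Thus I expect no real obstacle here beyond cleanly combining those results; the corollary is a formal consequence of the preceding propositions.
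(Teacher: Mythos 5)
Your argument is correct and is exactly the intended one: the paper gives no separate proof of this corollary, treating it as an immediate consequence of the preceding proposition for $3\nmid n$ and of Proposition \ref{prop:Cocycles}(3) for $n=3m$, just as you do. Your bookkeeping ($-km[\omega_0]=0$ in $\ZZ_{3m}$ iff $3\mid k$ iff $24\mid c=8k$) is the right check and matches the paper's standing assumption $c\in 8\NN$.
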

In particular, we have proven Proposition \ref{prop:Cyclic},
since any cyclic permutation in $\A^{\otimes m}$ is conjugate 
to $\tau_n\otimes \id$.

\section{All gaugings for cyclic permutation orbifolds}
Given a unitary modular tensor category $\cC$ 
we can consider $\cC^{\boxtimes n}$ which has a categorical action
of any subgroup $G\leq S_n$. 
Recently,  T.~Gannon and C.~Jones have showed \cite{GaJo2018} that certain
obstructions vanish and that therefore
such a symmetry can always be gauged, \ie there is a with the categorical action compatible $G$-crossed braided extension $\cC\wr G\supset \cC^{\boxtimes n}$. 
The equivariantization 
$(\cC\wr G)^G$  is a new unitary modular tensor category, which correspond to gauging.
If $\cC=\Rep(\A)$ for a rational conformal net, then 
$\tRep G(A^{\otimes n})$ (where $G$ acts by permutations) 
is a  $G$-crossed braided extension 
and $\Rep((A^{\otimes n})^G)$ is a  special gauging.

Using cyclic orbifolds of rational (not necessarily holomorphic)
nets, we show that if a unitary modular tensor category 
$\cC$ is realized by conformal nets, 
then all $\ZZ_n$-permutation gaugings of $\cC$ are realized. 
\begin{prop}
  Consider the unitary modular tensor category 
  $\cC=\Rep(\A)$ for a rational conformal net $\A$.
  
  Then any
   unitary $\ZZ_n$-crossed braided extension   $\cC\wr\ZZ_n$ of $\cC^{\boxtimes n}$ 
where $\ZZ_n$ acts 
by cyclic permutations on $\cC^{\boxtimes n}$
   is realized as $\tRep{\ZZ_n}(\cB)$ 
  for some conformal net $\cB$ and  $\ZZ_n\hookrightarrow\Aut(\cB)$.

  In particular, any gauging of the cyclic permutation on $\cC^{\boxtimes n}$
are realized by a conformal net $\cB^{\ZZ_n}$.
\end{prop}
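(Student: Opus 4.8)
The plan is to reduce the general case to realizing the ``pure anomaly'' by a holomorphic net carrying a suitable $\ZZ_n$-action, and then to splice this onto $\A^{\otimes n}$ by a diagonal construction. The standard extension $\cE_0:=\tRep{\ZZ_n}(\A^{\otimes n})$, with $\ZZ_n$ acting by cyclic permutation, is already realized by $\A^{\otimes n}$ itself. Given an arbitrary unitary $\ZZ_n$-crossed braided extension $\cE$ of $\cC^{\boxtimes n}$ with the same cyclic action, both $\cE$ and $\cE_0$ restrict to $\cC^{\boxtimes n}$ on the trivially graded component and induce the same categorical action. By the vanishing of the relevant obstructions \cite{GaJo2018} together with the general classification of crossed braided extensions, the extensions with this fixed base and action form a torsor over $H^{3}(\ZZ_n,\TT)\cong\ZZ_n$, the torsor action being Deligne product with the pointed $\ZZ_n$-crossed braided categories $\Vect_{\ZZ_n}^{\omega}$ (with trivial action on the trivial component). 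Hence $\cE\simeq\cE_0\boxtimes\Vect_{\ZZ_n}^{\omega}$ for some $\omega\in H^{3}(\ZZ_n,\TT)$, and it suffices to realize each $\Vect_{\ZZ_n}^{\omega}$.

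Concretely, I would construct a \emph{holomorphic} net $\cD_\omega$ together with $\ZZ_n\hookrightarrow\Aut(\cD_\omega)$ of anomaly $\omega$, so that $\tRep{\ZZ_n}(\cD_\omega)\simeq\Vect_{\ZZ_n}^{\omega}$. Its equivariantization is the twisted double $\cZ(\ZZ_n,\omega)$. Because $\ZZ_n$ is cyclic we have $H^{2}(\ZZ_n,\TT)=0$, so every transgressed $2$-cocycle is trivial and $\cZ(\ZZ_n,\omega)$ is \emph{pointed} (as already seen for $\ZZ_3$ in Table \ref{tab:Z3}); being a center it is moreover modular of dimension $n^2$, hence braided equivalent to $\cC(A,q)$ for a metric group $(A,q)$ with $|A|=n^2$. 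Every metric group is the discriminant form of an even lattice, so $\cZ(\ZZ_n,\omega)\cong\Rep(\A_L)$ for a lattice net $\A_L$ as in \cite{DoXu2006}. The canonical Tannakian subcategory $\Rep(\ZZ_n)\subset\cZ(\ZZ_n,\omega)$ is Lagrangian, since its dimension $n$ satisfies $n^{2}=|A|$; de-equivariantizing $\A_L$ along it, \ie performing the Doplicher--Roberts field-net reconstruction as in \cite{Bi2018}, yields a net $\cD_\omega$ with $\cD_\omega^{\ZZ_n}=\A_L$, $\Rep(\cD_\omega)=\Vect$ (so $\cD_\omega$ is holomorphic), and $\tRep{\ZZ_n}(\cD_\omega)\simeq\Vect_{\ZZ_n}^{\omega}$. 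Note that the $\ZZ_n$-action so produced is \emph{not} a permutation action, which is why it can realize all of $H^{3}(\ZZ_n,\TT)$ rather than only the gravitational $\ZZ_3$ of Proposition \ref{prop:Cocycles}.

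With $\cD_\omega$ in hand I would set $\cB:=\A^{\otimes n}\otimes\cD_\omega$ and equip it with the \emph{diagonal} $\ZZ_n$-action, given by the cyclic permutation on $\A^{\otimes n}$ and by the chosen anomaly-$\omega$ action on $\cD_\omega$. Since $\cD_\omega$ is holomorphic, $\Rep(\cB)=\cC^{\boxtimes n}\boxtimes\Vect=\cC^{\boxtimes n}$, on which $\ZZ_n$ acts by cyclic permutation because the induced action on the $\Vect$-factor is trivial. Multiplicativity of twisted representations under tensor products with a diagonal group action then gives, as $\ZZ_n$-crossed braided categories, $\tRep{\ZZ_n}(\cB)\simeq\tRep{\ZZ_n}(\A^{\otimes n})\boxtimes\tRep{\ZZ_n}(\cD_\omega)=\cE_0\boxtimes\Vect_{\ZZ_n}^{\omega}\simeq\cE$. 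Taking fixed points, the corresponding gauging $(\cC\wr\ZZ_n)^{\ZZ_n}=\cE^{\ZZ_n}$ is realized as $\Rep(\cB^{\ZZ_n})$ for the orbifold net $\cB^{\ZZ_n}$, which yields the final statement.

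The main obstacle I expect is the categorical multiplicativity $\tRep{G}(\cB_1\otimes\cB_2)\simeq\tRep{G}(\cB_1)\boxtimes\tRep{G}(\cB_2)$ for diagonal actions: one must match not only the fusion but also the crossed braiding and the associativity data, and check compatibility with the gradings. A secondary point is to carry out the de-equivariantization inside the category of diffeomorphism-covariant nets, so that $\cD_\omega$ is again a bona fide conformal net; together with verifying that $\Rep(\ZZ_n)$ is Lagrangian in every cyclic twisted double and that the torsor action is exactly Deligne product with $\Vect_{\ZZ_n}^{\omega}$, these are the steps requiring the most care.
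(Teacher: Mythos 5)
Your proposal is correct and follows essentially the same route as the paper: realize the pure anomaly $\Vect_{\ZZ_n}^{\omega}$ by a holomorphic net obtained as a simple current extension of a lattice net realizing the pointed double $\cZ(\ZZ_n,\omega)$ (the paper cites \cite[Theorem 3.6]{Bi2018} for this), and then splice it onto $\A^{\otimes n}$ via the diagonal $\ZZ_n$-action, the paper invoking \cite[Lemma 2.3]{EdJoPl2018} for the count of $n$ distinguished extensions and \cite[Proposition 3.4]{Bi2018} for the multiplicativity of twisted representations under the diagonal action that you flag as the main remaining check.
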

\begin{proof} 
  There are $n$ distinguished extensions $\cC\wr \ZZ_n$ \cite[Lemma 2.3]{EdJoPl2018}.
  One is realized by the cyclic permutation orbifold \cite{LoXu2004,KaLoXu2005} 
  $\ZZ_n\hookrightarrow \langle\tau_n\rangle\leq  \Aut(\A^{\otimes})$.
  Let $[\omega]\in H^3(\ZZ_n,\TT)\cong \ZZ_n$.
  Since 
  $\mathcal Z(\Hilb_{\ZZ_n}^\omega)$ is pointed, 
  by \cite[Theorem 3.6]{Bi2018} there is a conformal net associated with 
  a lattice $\A_L$ realizing $\mathcal Z(\ZZ_n,\omega)$.
  Then there is a $\hat \ZZ_n$-simple current extension $\cB_\omega\supset\A_L$
  and $\ZZ_n\hookrightarrow\Aut(\cB_\omega)$, such that 
  $\tRep{\ZZ_n}(\cB_\omega)\cong \Hilb_{\ZZ_n}^\omega$.

  Finally, $\tRep{\Delta(\ZZ_n)}(\A^{\otimes n}\otimes \cB_\omega)$ 
with $\Delta(\ZZ_n)\subset \ZZ_n\times\ZZ_n$ the diagonal subgroup gives all 
 $\ZZ_n$-crossed braided extensions  by varying the class $[\omega]$ 
using \cite[Proposition 3.4]{Bi2018}.
\end{proof} 
We note that the reconstruction program asks if for any unitary modular tensor category $\cC$ there is a conformal net realizing it.
In this perspective, the $H^3(\ZZ_n,\TT)$ freedom in gauging of cyclic 
permutations does not give any obstructions.

\newcommand{\midarrow}{\tikz \draw[->] (0,0) -- +(.1,0);}
\newcommand{\midarroww}{\tikz \draw[->>] (0,0) -- +(.1,0);}
\begin{figure}
  \begin{align*}
&\begin{tikzpicture}[scale=.7]
  \draw[black](-.5,-.5)-- node[sloped] {\midarrow} (-.5,8.5);
  \draw[black](8.5,-.5)--node[sloped] {\midarrow} (8.5,8.5);
  \draw[black](-.5,8.5)--node[sloped] {\midarroww} (8.5,8.5);
  \draw[black](-.5,-.5)-- node[sloped] {\midarroww}(8.5,-.5);
  \path [clip] (-.5,-.5)--(-.5,8.5)--(8.5,8.5)--(8.5,-.5)
  --(-.5,-.5);
  \fill[black!3] (-.5,-.5)--(-.5,8.5)--(8.5,8.5)--(8.5,-.5)
  --(-.5,-.5);
  \foreach \x in {-2,-1,...,34}
  {
    \draw[black!50] (\x,-.5)--(\x,8.5);
  }
  \foreach \y in {0,1,...,9}
  {
    \draw (-.5,\y)--(8.5,\y);
  }
  \foreach \x in {0,1,...,8}
  {
    \foreach \y in {0,1,...,8}
    {
      \fill (\x,\y) circle (.1);
    }
  }
  \draw[black](-.5,-.5)-- node[sloped] {\midarrow} (-.5,8.5);
  \draw[black](8.5,-.5)--node[sloped] {\midarrow} (8.5,8.5);
  \draw[black](-.5,8.5)--node[sloped] {\midarroww} (8.5,8.5);
  \draw[black](-.5,-.5)-- node[sloped] {\midarroww}(8.5,-.5);
\end{tikzpicture}
\\&\qquad
\\&
\begin{tikzpicture}[scale=.7]
  \draw[black](-.5,-.6666)-- node[sloped] {\midarrow} (-.5,8.3333);
  \draw[black](8.5,2.3333)--node[sloped] {\midarrow} (8.5,11.33333);
  \draw[black](-.5,8.3333)--node[sloped] {\midarroww} (8.5,11.33333);
  \draw[black](-.5,-.6666)-- node[sloped] {\midarroww}(8.5,2.3333);
  \path [clip] (-.5,-.6666)--(-.5,8.3333)--(8.5,11.33333)--(8.5,2.3333)
  --(-.5,-.6666);
  \fill[black!3] (-.5,-.66666)--(-.5,8.3333)--(8.5,11.33333)--(8.5,2.3333) --(-.5,-.66666);
  \foreach \y in {-2,-1,...,34}
  {
    \draw[black!50] (-3,0.33333*\y-1)--(9,0.33333*\y+3);
  }
  \foreach \y in {0,3,...,9}
  {
    \draw[red,thick] (-.5,\y+2)--(8.5,\y+2);
  }
  \foreach \y in {0,3,...,12}
  {
    \draw[black!60!green,thick] (-.5,1+\y)--(8.5,1+\y);
  }
  \foreach \y in {0,3,...,9}
  {
    \draw[black!20!blue,thick] (-.5,\y)--(8.5,\y);
  }
  \foreach \x in {0,1,...,8}
  {
    \foreach \y in {0,1,...,11}
    {
      \fill (\x,\y) circle (.1);
    }
  }
  \draw[black](-.5,-.6666)-- node[sloped] {\midarrow} (-.5,8.3333);
  \draw[black](8.5,2.3333)--node[sloped] {\midarrow} (8.5,11.33333);
  \draw[black](-.5,8.3333)--node[sloped] {\midarroww} (8.5,11.33333);
  \draw[black](-.5,-.6666)-- node[sloped] {\midarroww}(8.5,2.3333);
\end{tikzpicture}
\end{align*}
\caption{$\ZZ_{9}\oplus\ZZ_9$ and $\ZZ_{27}\oplus\ZZ_9$ fusion rules for 
$\Rep((\A^{\otimes 9})^{\langle\tau_9\rangle})$
for central charge $c\equiv 0\pmod{24}$ and $c\equiv 8\pmod {24}$, respectively. The gravitational anomaly 
$\exp{\frac{2\ima\pi k}3}$ with $k=0,1$, respectively,
twists the torus.}
\label{fig:1}
\end{figure}
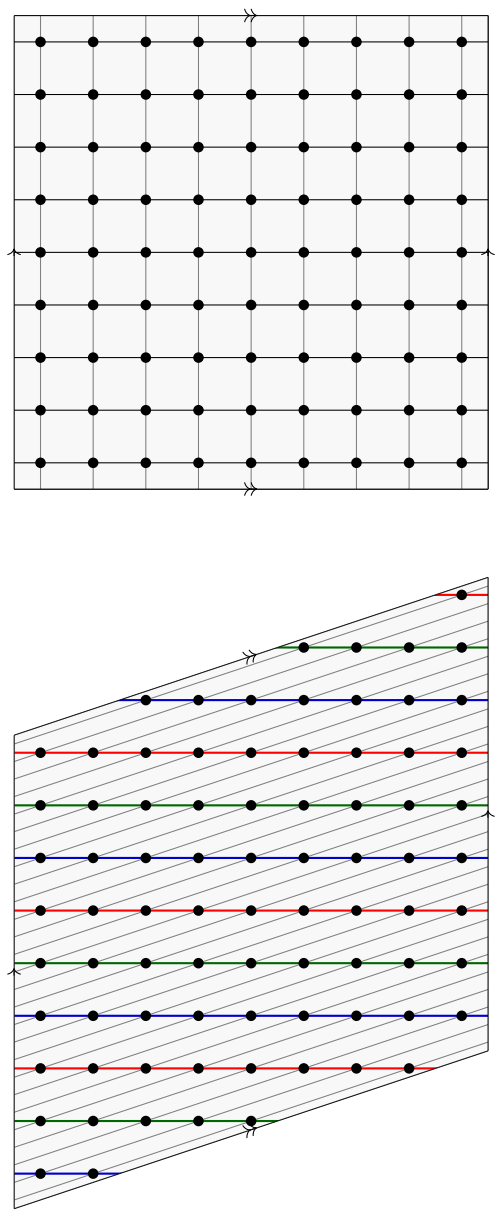

\commentout{
$$
\begin{tikzpicture}[scale=.5]
  \path [clip] (-.5,-.5)--(-.5,4.5)--(4.5,6.1666)--(4.5,1.166)
  --(-.5,-.5);
  \fill[black!10] (-.5,-.5)--(-.5,4.5)--(4.5,6.1666)--(4.5,1.166)
  --(-.5,-.5);
  \foreach \x in {0,1,...,4}
  {
    \foreach \y in {0,1,...,7}
    {
      \fill (\x,\y) circle (.1);
    }
  }

  \draw[dashed,red] (-.5,5.666)--(4.5,5.666);
  \draw[dashed,red,opacity=0.4]
  (-.5,4)--(4.5,4)
  (-.5,2.333)--(4.5,2.333)
  (-.5,0.666)--(4.5,0.666);
  \draw[dash dot,black!60!green] (-.5,3)--(4.5,3);
  \draw[dash dot,black!60!green,opacity=0.4]
  (-.5,4.666)--(4.5,4.666)
  (-.5,1.333)--(4.5,1.333)
  (-.5,-0.666)--(4.5,-0.666);
  \draw[black!20!blue] (-.5,2)--(4.5,2);
  \draw[black!20!blue,opacity=0.4] 
    (-.5,5.333)--(4.5,5.333)
    (-.5,3.666)--(4.5,3.666)
    (-.5,0.333)--(4.5,0.333);
  \draw[black!20!blue] 
    (-.5,6)--(4.5,6)
    (-.5,1)--(4.5,1);
  \draw[black!20!blue,opacity=0.4] 
    (-.5,4.333)--(4.5,4.333)
    (-.5,2.666)--(4.5,2.666)
    (-.5,-0.666)--(4.5,-.666);
  \draw[dashed,black!20!blue!red] 
    (-.5,0)--(4.5,0)
    (-.5,5)--(4.5,5);
  \draw[dashed,black!20!blue!red,opacity=0.4] 
    (-.5,3.333)--(4.5,3.333)
    (-.5,1.666)--(4.5,1.666)
\end{tikzpicture}
$$
}

\begin{appendix}
\section{Lagrangian extensions}
\label{app}
A \textbf{premetric group} $(A,q)$ consists of a finite abelian group $A$
which we see as an additive group
and a quadratic form $q\colon A\to \QQ/\ZZ$, \ie  $q(na)=n^2q(a)$
for all $a\in A$ and $n\in\ZZ$ 
and $\partial q(a,b) = q(a+b)-q(a)-q(b)$ is a bicharacter. 
A \textbf{metric group} is a premetric group $(A,q)$ with $\partial q$ non-degenerate.
A morphism $\tau\colon (A_1,q_1)\to (A_2,q_2)$ is a homorphism $\tau\colon A_1\to A_2$ 
with $ q_1=q_2\circ\tau$.

The following is well-known, see eg.\ \cite{JoSt1993} and \cite{EtGeNiOs2015}.
\begin{prop}
  \label{prop:PointedUMTC}
  Given a metric group $(A,q)$ there is an up to
  braided equivalence unique unitary modular tensor category denoted by
  $\cC(A,q)$ such that the braiding 
  $c_{X_a,X_a}=c_q(g)\cdot 1_{X_a\otimes X_a}$ and thus the 
  twist $\theta_{X_a} = \exp(2\pi \ima q(a))$ for all $a\in A$. 

  Conversely, given a pointed unitary modular tensor category $\cC$, 
  the finite set $A=\Irr(\cC)$ is an abelian group under the tensor product 
  and the braiding $c$ defines a quadratic form $q(g)\cdot 1_{X_g\otimes X_g}=
  c_{X_g,X_g}$ for every $g=[X_g]\in G$.
  Then $\cC$ is braided equivalent to $\cC(A,q)$.
\end{prop}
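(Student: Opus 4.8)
The plan is to reduce both directions to the Eilenberg--MacLane classification of braided structures on pointed categories, which underlies \cite{JoSt1993} and is recorded in \cite{EtGeNiOs2015}. The key input is that the underlying fusion category of any pointed fusion category with group of invertibles $A$ is $\Vect_A^\omega$ for some associator class $\omega\in H^3(A,\TT)$, that a braiding on such a category amounts to an abelian $3$-cocycle lifting $\omega$, and that the abelian cohomology group $H^3_{\mathrm{ab}}(A,\TT)$ is isomorphic, via the self-braiding (``trace'') map, to the group of quadratic forms $q\colon A\to\QQ/\ZZ$. This single isomorphism drives everything: it sends an abelian cocycle to the form $a\mapsto q(a)$ read off from its self-braiding scalar $c_{X_a,X_a}=\exp(2\pi\ima q(a))\cdot 1_{X_a\otimes X_a}$.

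For existence, I would take an abelian $3$-cocycle representing $q$ under this isomorphism and let $\cC(A,q)$ be the resulting braided pointed fusion category, equipped with its unitary structure; since every Hom-space in a pointed category is at most one-dimensional, positivity fixes the unitary structure uniquely up to unitary equivalence, and likewise there is a canonical unitary ribbon structure with $\theta_{X_a}=c_{X_a,X_a}$, giving the stated twist $\exp(2\pi\ima q(a))$. Modularity is then exactly non-degeneracy of the braiding: in the pointed case the $S$-matrix is, up to the normalization $|A|^{-1/2}$, the matrix $\bigl(\exp(2\pi\ima\,\partial q(a,b))\bigr)_{a,b}$, which is invertible if and only if the bicharacter $\partial q$ is non-degenerate, i.e.\ if and only if $(A,q)$ is a metric group. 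Uniqueness up to braided equivalence is immediate from the Eilenberg--MacLane isomorphism, since two braided pointed categories with the same group and the same self-braiding form represent the same class in $H^3_{\mathrm{ab}}(A,\TT)$; the unitary refinement contributes no further moduli.

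For the converse, let $\cC$ be pointed unitary modular and $A=\Irr(\cC)$. Each simple object is invertible, so $A$ is a group under $\otimes$, and the braiding isomorphisms $X_g\otimes X_h\cong X_h\otimes X_g$ force $A$ abelian. Because $X_g\otimes X_g$ is simple, $c_{X_g,X_g}$ is a scalar, so I define $q(g)$ by $c_{X_g,X_g}=\exp(2\pi\ima q(g))\cdot 1_{X_g\otimes X_g}$; that $q(ng)=n^2q(g)$ and that $\partial q$ is a bicharacter is the content of the hexagon axioms specialized to invertible objects, which is exactly the assertion that the Eilenberg--MacLane trace map lands in quadratic forms. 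Modularity of $\cC$ forces the $S$-matrix, hence $\partial q$, to be non-degenerate, so $(A,q)$ is a metric group, and $\cC\simeq\cC(A,q)$ by the uniqueness established above.

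The hard part is not any single direction but the Eilenberg--MacLane machinery itself: identifying $H^3_{\mathrm{ab}}(A,\TT)$ with quadratic forms and checking that the self-braiding scalar really is quadratic with the monodromy $\partial q$ as its associated bicharacter. Beyond the purely braided statement of \cite{JoSt1993}, the genuinely new bookkeeping is unitary and modular: verifying that abstract non-degeneracy of the braiding coincides with non-degeneracy of $\partial q$ through the $S$-matrix, and that a pointed category admits an essentially unique unitary and ribbon structure so that no extra invariants enter the classification.
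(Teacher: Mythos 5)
Your proposal is correct and is essentially the argument the paper points to: the paper gives no proof of Proposition \ref{prop:PointedUMTC}, stating it as well known and citing \cite{JoSt1993} and \cite{EtGeNiOs2015}, and your write-up is exactly the standard Eilenberg--MacLane argument from those references (abelian $3$-cocycles $\leftrightarrow$ quadratic forms via the self-braiding, non-degeneracy of $\partial q$ $\leftrightarrow$ invertibility of the $S$-matrix, uniqueness of the unitary and ribbon structures on a pointed category). No gaps; the only caveat worth recording is that the unitary refinement does require the observation that $H^n(A,\RR_{>0})$ vanishes so that associator and braiding can be chosen $\TT$-valued, which you implicitly use when asserting positivity fixes the unitary structure.
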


  We define 
  $H^3(A,\TT)_{\mathrm{ab}}=\ker(\psi^\ast)$,  
  where $\psi^\ast\colon H^3(G,\TT)\to \Hom(\Lambda^3
  G,\TT)$ is  given by
  \begin{align}
    \left[\psi^\ast([\omega])\right](x,y,z)&=
      \prod_{\omega\in\mathbb{S}_3}\omega(\sigma(x),\sigma(y),\sigma(z))
        ^{\sign(\sigma)}\,.
  \end{align}
  The Drinfel'd center $\cZ(A,\omega)= Z(\Vect^{\omega}_A)$ is pointed if and only if
  $[\omega]\in H^3(A,\TT)_\mathrm{ab}$
  \cite[Corollary 3.6]{NgMa2001}, see also \cite[Proposition 4.1]{Ng2003}.

Let $\hat B$ be an abelian group. 
A \textbf{Lagrangian extension}  of $\hat B$ is a triple $(A,q,\iota)$
consisting of a metric group $(A,q)$ with  $|A|=|\hat B|^2$ and 
a monomorphism $\iota\colon(\hat B,0)\hookrightarrow (A,q)$ of premetric groups.
The isomorphism classes of Lagrangian extensions of $\hat B$ form an abelian group 
$\Lex(\hat B)$ via the multiplication $(A_1,q_1,\iota_1)\boxplus(A_2,q_2,\iota_2)$, see \cite{DaSi2017-2} for details.
Given a Lagrangian extension $(A,q,\iota)$ of $\hat B$ we obtain a Lagrangian algebra
$L=\iota(\hat B)$ in $\cC(A,q)$ and $\cC(A,q)_L=\cC(A,q)_B$ is naturally 
isomorphic to $\Hilb^\omega_B$ for some $[\omega]\in H^3(B,\TT)_\mathrm{ab}$ and the map $(A,q,\iota)\to [\omega]$
gives an isomorphism $\Lex(\hat B)\to H^3(B,\TT)_{\mathrm{ab}}$ of abelian groups.
\begin{example} Let $A$ be an abelian group and $\hat A=\Hom(A,\QQ/\ZZ)$ 
  the dual group. 
  Then $(A\times \hat A, q_\mathrm{st},\iota)$ is an Lagrangian extension of 
  $A$, where $q_\mathrm{st}(a,\chi)=\chi(a)$ and $\iota\colon A\to A\otimes \hat A$ is the canonical inclusion.
  Note that the isomorphism class of $(A\times \hat A, q_\mathrm{st},\iota)$ is the unit under $\boxplus$ and thus correspond to the trivial cohomology class
  in $H^3(A,\TT)$.
\end{example}
\begin{lem}
  \label{lem:CyclicExtension}
  Let $(G,q,\iota)$ be a Lagrangian extension of $\ZZ_n$ 
  and consider the map $p\colon G\to G/\ZZ_n\cong \ZZ_n$.
  If there is a $x\in G$ with $p(x)$ a generator and $q(x)=0$,
  then $(G,q)\cong (\ZZ_n\times \hat \ZZ_n,q_\mathrm{st})$
  and $\cC(G,q)_{\iota(\ZZ_n)}$ is tensor equivalent to 
  $\Hilb_{\ZZ_n}$.
\end{lem}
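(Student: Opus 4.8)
The plan is to exhibit an explicit isomorphism of Lagrangian extensions $(G,q,\iota)\xrightarrow{\sim}(\ZZ_n\times\hat\ZZ_n,q_{\mathrm{st}},\iota_{\mathrm{can}})$ and then read off both assertions from the Example preceding the lemma. Write $\beta:=\partial q$, $L:=\iota(\ZZ_n)$ and $y:=\iota(1)$. First I would record that $L$ is Lagrangian: since $q|_L=0$ we have $\beta|_{L\times L}=0$, so $L\subseteq L^\perp$, and non-degeneracy of $\beta$ together with $|L|=n=\sqrt{|G|}$ forces $L=L^\perp$. Consequently $g\mapsto\beta(g,\slot)|_L$ descends to an isomorphism $G/L\xrightarrow{\sim}\hat L\cong\ZZ_n$ (this is the identification $p$ of the statement), and the hypothesis that $p(x)$ is a generator says exactly that $\beta(x,\slot)|_L$ generates $\hat L$, i.e.\ $\beta(x,y)=a/n$ with $\gcd(a,n)=1$. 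Replacing $x$ by $a^{-1}x$—which preserves $q(x)=0$ and keeps $p(x)$ a generator—I may assume $\beta(x,y)=1/n$, leaving $y=\iota(1)$ untouched.

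The heart of the argument is to split $L$ off as a direct summand, and this is exactly where the hypothesis $q(x)=0$ does the work; I expect this to be the only real obstacle. Since $p(nx)=0$ we have $nx\in L$, say $nx=cy$. Pairing against $x$ gives on one side $\beta(nx,x)=n\,\beta(x,x)=n\,(q(2x)-2q(x))=2n\,q(x)=0$, and on the other side $\beta(cy,x)=c\,\beta(x,y)=c/n$. Hence $c/n\equiv 0\pmod 1$, so $n\mid c$ and $nx=0$. Thus $x$ has order exactly $n$ (it is at most $n$ and at least $\ord(p(x))=n$), and $\langle x\rangle\cap L=0$ because a nonzero multiple of $x$ has nonzero image under $p$. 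Counting orders, $G=\langle x\rangle\oplus L$ with $x$ and $y$ free generators of order $n$. The hypothesis $q(x)=0$ is essential here: it is precisely the vanishing $\beta(x,x)=2q(x)=0$ that forces $c\equiv 0$, and without it the extension need not split—for instance the twisted doubles $\cC(\ZZ_9,q_1)$ and $\cC(\ZZ_9,q_2)$ of Table \ref{tab:Z3} are Lagrangian extensions of $\ZZ_3$ admitting no such $x$.

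It remains to identify the form and the inclusion. Define $\phi\colon\ZZ_n\times\hat\ZZ_n\to G$ to be the homomorphism sending the generator $1$ of $\ZZ_n$ to $y=\iota(1)$ and the generating character $\chi_0\colon j\mapsto j/n$ of $\hat\ZZ_n$ to $x$; by the previous paragraph $\phi$ is a group isomorphism and, since $y=\iota(1)$, it satisfies $\phi\circ\iota_{\mathrm{can}}=\iota$. For $(a,\chi)$ with $\chi=t\chi_0$ one computes $q(\phi(a,\chi))=q(ay+tx)=a^2q(y)+t^2q(x)+at\,\beta(x,y)=at/n=\chi(a)=q_{\mathrm{st}}(a,\chi)$, so $\phi$ is an isomorphism of metric groups, giving the first claim. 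As $\phi\circ\iota_{\mathrm{can}}=\iota$, it is in fact an isomorphism of Lagrangian extensions, so $(G,q,\iota)$ and the standard split extension $(\ZZ_n\times\hat\ZZ_n,q_{\mathrm{st}},\iota_{\mathrm{can}})$ represent the same class in $\Lex(\ZZ_n)$. By the Example above that class is the unit, corresponding to the trivial element of $H^3(\ZZ_n,\TT)_{\mathrm{ab}}$ under $\Lex(\ZZ_n)\cong H^3(\ZZ_n,\TT)_{\mathrm{ab}}$; hence $\cC(G,q)_{\iota(\ZZ_n)}\cong\Hilb_{\ZZ_n}$ with trivial associator, which is the second claim.
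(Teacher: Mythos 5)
Your proof is correct and takes essentially the same route as the paper: use $q(x)=0$ to show $\langle x\rangle$ is a Lagrangian complement of order $n$ to $\iota(\ZZ_n)$, write the explicit isomorphism onto $(\ZZ_n\times\hat\ZZ_n,q_{\mathrm{st}})$, and read off the trivial associator from the standard extension being the unit of $\Lex(\ZZ_n)$. The only (harmless) difference is how the order of $x$ is controlled: the paper notes $\langle x\rangle$ is isotropic, hence of order at most $\sqrt{|G|}=n$, while you normalize $\partial q(x,\iota(1))=1/n$ and compute $\partial q(nx,x)$ in two ways to get $nx=0$ directly.
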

\begin{proof}
  We claim that the order $\ord(x)$ of $x$ is $n$. 
  One the one hand, it is a multiple of $n$.
  On the other hand, $q(mx)\equiv 0\pmod 1$
  and thus $L'=\langle x\rangle$ is a isotropic subspace of $(G,q)$
  and thus $\ord(x) \leq n$.
  Then $\chi(n)=q(x+\iota(n))=q(x+\iota(n))-q(x)-q(\iota(n))=\partial(\iota(n),x)$ defines a character $\chi\colon \ZZ_n\to \QQ/\ZZ$ 
  and $\langle \chi\rangle =\hat \ZZ_n$ because $q$ is non-degenerate. 
  Finally,    
  $\iota(m)+nx\mapsto (m,n\chi)$
  gives an isomorphism of metric groups 
  $(G,q)\to (\ZZ_n\times\hat \ZZ_n,q_\mathrm{st})$.
\end{proof}

\begin{example}
  \label{ex:OrderThree}
Let $n=3m$ and consider the following Lagrangian extension of $\hat\ZZ_{3m}=\{\chi_j\}$
with $\chi_j\colon \ZZ_{3m}\to \QQ/\ZZ$ with $\chi_j(x)=\frac {jx}{3m}$.
We define Lagrangian extensions $(A_\pm,q_\pm,\iota_\pm)$ and
$(A_0,q_0,\iota_0)$
\begin{align}
  A_\pm&=\ZZ_{9m}\oplus\ZZ_m &
  q_\pm(x,y)&=\pm\frac{x^2}{9m}\mp\frac{y^2}{m}\\
  A_0&=\hat \ZZ_{3m}\oplus\ZZ_{3m}&
  q_0(\chi,x)&=\chi(x)
\end{align}
with $\iota_0$ the canonical embedding
$\iota_0\colon\hat{\ZZ_3}\to\hat\ZZ_3\oplus\ZZ_3=\A_0$.
Then with $j=(9m-3,1)$ is a simple current of order $3m$ and 
we have the short exact sequence
\begin{align}
  \{0\}\longrightarrow \hat \ZZ_{3m}\xrightarrow{\iota_\pm\colon \chi_1\mapsto j}
  A_\pm=\ZZ_{9m}\oplus\ZZ_m\longrightarrow (\ZZ_{9m} \oplus\ZZ_m)/\langle j\rangle\longrightarrow \{0\}\,.
\end{align}
We have the relations $(A_\pm,q_\pm,\iota_\pm)\boxplus (A_\pm,q_\pm,\iota_\pm)=(A_\mp,q_\mp,\iota_\mp)$ and 
$(A_+,q_+,\iota_+)\boxplus(A_-,q_-,\iota_-)=(A_0,q_0,\iota_0)$ which gives a subgroup of $\Lex(\hat \ZZ_{3m})$ isomorphic to $\ZZ_3$.
Let $[\omega_\pm]$ be the cohomology class associated with $(A_\pm,q_\pm)$, then
$[\omega_\pm]=\pm m[\omega]\in\langle[\omega]\rangle=H^3(\ZZ_{3m},\TT)$.
This are the cocycle arising in Proposition \ref{prop:Cocycles}.
\end{example}
\end{appendix}

\def\cprime{$'$}\newcommand{\noopsort}[1]{}
\begin{bibdiv}
\begin{biblist}

\bib{Bi2015}{article}{
      author={Bischoff, Marcel},
       title={A remark on {CFT} realization of quantum doubles of subfactors:
  case index {$<4$}},
        date={2016},
        ISSN={0377-9017},
     journal={Lett. Math. Phys.},
      volume={106},
      number={3},
       pages={341\ndash 363},
         url={http://dx.doi.org/10.1007/s11005-016-0816-z},
      review={\MR{3462031}},
}

\bib{Bi2018}{article}{
      author={Bischoff, Marcel},
       title={Conformal net realizability of {T}ambara-{Y}amagami categories
  and generalized metaplectic modular categories},
        date={2018},
     journal={arXiv preprint arXiv:1803.04949},
}

\bib{DaSi2017-2}{article}{
      author={Davydov, Alexei},
      author={Simmons, Darren~A.},
       title={Third cohomology and fusion categories},
        date={2018},
        ISSN={1532-0073},
     journal={Homology Homotopy Appl.},
      volume={20},
      number={1},
       pages={275\ndash 302},
         url={https://doi.org/10.4310/HHA.2018.v20.n1.a17},
      review={\MR{3775361}},
}

\bib{DoXu2006}{article}{
      author={Dong, Chongying},
      author={Xu, Feng},
       title={{Conformal nets associated with lattices and their orbifolds}},
        date={2006},
        ISSN={0001-8708},
     journal={Adv. Math.},
      volume={206},
      number={1},
       pages={279–306},
      eprint={math/0411499v2},
         url={http://dx.doi.org/10.1016/j.aim.2005.08.009},
}

\bib{EvGa2018}{article}{
      author={Evans, David~E},
      author={Gannon, Terry},
       title={Reconstruction and local extensions for twisted group doubles,
  and permutation orbifolds},
        date={2018},
     journal={arXiv preprint arXiv:1804.11145},
}

\bib{EtGeNiOs2015}{book}{
      author={Etingof, Pavel},
      author={Gelaki, Shlomo},
      author={Nikshych, Dmitri},
      author={Ostrik, Victor},
       title={Tensor categories},
      series={Mathematical Surveys and Monographs},
   publisher={American Mathematical Society, Providence, RI},
        date={2015},
      volume={205},
        ISBN={978-1-4704-2024-6},
         url={http://dx.doi.org/10.1090/surv/205},
      review={\MR{3242743}},
}

\bib{EdJoPl2018}{article}{
      author={Edie-Michell, Cain},
      author={Jones, Corey},
      author={Plavnik, Julia},
       title={Fusion rules for {$\mathbb{Z}/ 2\mathbb{Z}$} permutation
  gauging},
        date={2018},
     journal={arXiv preprint arXiv:1804.01657},
}

\bib{GaJo2018}{article}{
      author={Gannon, Terry},
      author={Jones, Corey},
       title={Vanishing of categorical obstructions for permutation orbifolds},
        date={2018},
     journal={arXiv preprint arXiv:1804.08343},
}

\bib{GuLo1996}{article}{
      author={Guido, Daniele},
      author={Longo, Roberto},
       title={The conformal spin and statistics theorem},
        date={1996},
        ISSN={0010-3616},
     journal={Comm. Math. Phys.},
      volume={181},
      number={1},
       pages={11\ndash 35},
         url={http://projecteuclid.org/euclid.cmp/1104287623},
      review={\MR{1410566 (98c:81121)}},
}

\bib{Jo2017}{article}{
      author={Johnson-Freyd, Theo},
       title={The moonshine anomaly},
        date={2017},
     journal={arXiv preprint arXiv:1707.08388},
}

\bib{JoSt1993}{article}{
      author={Joyal, Andr\'{e}},
      author={Street, Ross},
       title={Braided tensor categories},
        date={1993},
        ISSN={0001-8708},
     journal={Adv. Math.},
      volume={102},
      number={1},
       pages={20\ndash 78},
         url={https://doi.org/10.1006/aima.1993.1055},
      review={\MR{1250465}},
}

\bib{KaLo2006}{article}{
      author={Kawahigashi, Y.},
      author={Longo, Roberto},
       title={{Local conformal nets arising from framed vertex operator
  algebras}},
        date={2006},
        ISSN={0001-8708},
     journal={Adv. Math.},
      volume={206},
      number={2},
       pages={729–751},
      eprint={math/0411499v2},
}

\bib{KaLoMg2001}{article}{
      author={Kawahigashi, Y.},
      author={Longo, Roberto},
      author={Müger, Michael},
       title={{Multi-Interval Subfactors and Modularity of Representations in
  Conformal Field Theory}},
        date={2001},
     journal={Comm. Math. Phys.},
      volume={219},
       pages={631–669},
      eprint={arXiv:math/9903104},
}

\bib{KaLoXu2005}{article}{
      author={Kac, Victor~G.},
      author={Longo, Roberto},
      author={Xu, Feng},
       title={Solitons in affine and permutation orbifolds},
        date={2005},
        ISSN={0010-3616},
     journal={Comm. Math. Phys.},
      volume={253},
      number={3},
       pages={723\ndash 764},
         url={http://dx.doi.org/10.1007/s00220-004-1160-1},
      review={\MR{2116735 (2006b:81154)}},
}

\bib{LoXu2004}{article}{
      author={Longo, Roberto},
      author={Xu, Feng},
       title={{Topological sectors and a dichotomy in conformal field theory}},
        date={2004},
        ISSN={0010-3616},
     journal={Comm. Math. Phys.},
      volume={251},
      number={2},
       pages={321–364},
         url={http://dx.doi.org/10.1007/s00220-004-1063-1},
      review={\MR{2100058 (2005i:81087)}},
}

\bib{Mg2005}{article}{
      author={Müger, Michael},
       title={{Conformal Orbifold Theories and Braided Crossed G-Categories}},
        date={2005},
        ISSN={0010-3616},
     journal={Comm. Math. Phys.},
      volume={260},
       pages={727–762},
         url={http://dx.doi.org/10.1007/s00220-005-1291-z},
}

\bib{Mg2010}{inproceedings}{
      author={Müger, Michael},
       title={{On superselection theory of quantum fields in low dimensions}},
        date={2010},
   booktitle={{X{VI}th {I}nternational {C}ongress on {M}athematical
  {P}hysics}},
   publisher={World Sci. Publ., Hackensack, NJ},
       pages={496–503},
         url={http://dx.doi.org/10.1142/9789814304634_0041},
      review={\MR{2730815 (2012i:81165)}},
}

\bib{NgMa2001}{article}{
      author={Mason, Geoffrey},
      author={Ng, Siu-Hung},
       title={Group cohomology and gauge equivalence of some twisted quantum
  doubles},
        date={2001},
        ISSN={0002-9947},
     journal={Trans. Amer. Math. Soc.},
      volume={353},
      number={9},
       pages={3465\ndash 3509},
         url={http://dx.doi.org/10.1090/S0002-9947-01-02771-4},
      review={\MR{1837244}},
}

\bib{Ng2003}{incollection}{
      author={Ng, Siu-Hung},
       title={Non-commutative, non-cocommutative semisimple {H}opf algebras
  arise from finite abelian groups},
        date={2003},
   booktitle={Groups, rings, {L}ie and {H}opf algebras ({S}t. {J}ohn's, {NF},
  2001)},
      series={Math. Appl.},
      volume={555},
   publisher={Kluwer Acad. Publ., Dordrecht},
       pages={167\ndash 177},
      review={\MR{1995058}},
}

\bib{Tu2010}{book}{
      author={Turaev, Vladimir},
       title={Homotopy quantum field theory},
      series={EMS Tracts in Mathematics},
   publisher={European Mathematical Society (EMS), Z\"urich},
        date={2010},
      volume={10},
        ISBN={978-3-03719-086-9},
         url={http://dx.doi.org/10.4171/086},
        note={Appendix 5 by Michael M\"uger and Appendices 6 and 7 by Alexis
  Virelizier},
      review={\MR{2674592}},
}

\bib{Wi2007}{article}{
      author={Witten, Edward},
       title={Three-dimensional gravity revisited},
        date={2007},
     journal={arXiv preprint arXiv:0706.3359},
}

\bib{Xu2000-2}{article}{
      author={Xu, Feng},
       title={{Algebraic orbifold conformal field theories}},
        date={2000},
     journal={Proc. Nat. Acad. Sci. U.S.A.},
      volume={97},
      number={26},
       pages={14069},
      eprint={arXiv:math/0004150v1 [math.QA]},
}

\end{biblist}
\end{bibdiv}
\address
\end{document}